\documentclass[10pt,a4paper]{article}
\usepackage{amssymb}
\usepackage{enumitem}
\usepackage{amsmath}


\newif\ifdraft \drafttrue   


\ifdraft
\newcommand{\shinocom}[1]{\textcolor{magenta}{[(篠原)#1]}}
\newcommand{\yoshicom}[1]{\textcolor{red}{[(吉仲)#1]}}
\newcommand{\usercom}[1]{\textcolor{blue}{[(user)#1]}}
\newcommand{\todo}[1]{{\color{red}{[ToDo: #1]}}}
\else
\newcommand{\shinocom}[1]{}
\newcommand{\yoshicom}[1]{}
\newcommand{\usercom}[1]{}
\newcommand{\todo}[1]{}
\fi


\usepackage[dvipdfmx]{color}                      
\usepackage[dvipdfmx]{graphicx}                
\usepackage{amsthm}
\usepackage{bm}



\newcommand{\np}{\mathcal{N}\textrm{-position}}
\newcommand{\pp}{\mathcal{P}\textrm{-position}}
\newcommand{\ppos}{\mathcal{P}\mathit{pos}}
\newcommand{\npos}{\mathcal{N}\mathit{pos}}
\newcommand{\str}{s}
\newcommand{\aba}{\texttt{a}^i \texttt{b}^j \texttt{a}^k}
\newcommand{\intpair}[1]{\langle #1 \rangle}
\newcommand{\strnim}{\textsc{StrNim}}
\newcommand{\nim}{\textsc{Nim}}
\newcommand{\tm}[1]{\mathit{TM}_{\!#1}}
\newcommand{\ope}[1]{\stackrel{#1}{\longrightarrow}}
\newcommand{\lose}[2]{Lose({#1})[{#2}]}
\newcommand{\mcal}[1]{\mathcal{#1}}
\newcommand{\mtt}[1]{\mathtt{#1}}

\newcommand{\tsc}[1]{\textsc{#1}}
\newcommand{\next}{\mathit{next}}
\newcommand{\lrfloor}[1]{\left\lfloor{#1}\right\rfloor}



\theoremstyle{definition}
\newtheorem{definition}{Definition}
\theoremstyle{plain}
\newtheorem{theorem}{Theorem}
\newtheorem{corollary}{Corollary}
\newtheorem{lemma}{Lemma}





\usepackage{titlesec}


\makeatletter

\def\ps@headings{
	\def\@oddhead{		}
	\def\@evenhead{}
	\def\@evenfoot{\hfil \,\,\thepage\hfill}
	\def\@oddfoot{\hfil \,\,\thepage\hfil}
}

\def\ps@titleheadings{
	\def\@oddhead{}
	\def\@evenhead{}
	\def\@evenfoot{\hfil \,\,\thepage\hfill}
	\def\@oddfoot{\hfil \,\,\thepage\hfil}
}
\makeatother

\pagestyle{headings}

\title{
	\textbf{StrNim: a variant of Nim played on strings}}
\author{
	Shota Mizuno\thanks{Tohoku University} \and
	Ryo Yoshinaka\footnotemark[1] \and
	Ayumi Shinohara\footnotemark[1]
}
\date{\empty}


\begin{document}
\maketitle

\begin{abstract}
  We propose a variant of \nim, named  \strnim. Whereas a position in \nim \ is a tuple of non-negative integers, that in  \strnim \ is a string, a sequence of characters. 
  In every turn, each player shrinks the string, by removing a substring repeating the same character.
  As a first study on this new game, we present some sufficient conditions for the positions to be $\mcal{P}$-positions.
\end{abstract}
\section{Introduction}


\nim{} is a well-known and well-studied combinatorial game.
Combinatorial games consist of a collection of rules, with no element of chance (like dice rolls or card draws) or no hidden information.
Both players have complete knowledge of the game's state at all times.

A position of \nim \ consists of several heaps of tokens.
Two players alternatively remove one or more tokens from any \emph{one} of the heaps, and whoever removes the last token \emph{wins}.
{\nim} is completely analyzed by Bouton~\cite{bouton1901nim}.
A large number of its variants (e.g.,~\cite{wythoffgame,fukuyama2003nim,suetsugu2019delete,locke2021amalgamation,MiyaderaM2023}) have been proposed and studied for more than a hundred years.

In this study, we propose yet another variant of {\nim}, named {\strnim}.
Whereas a position in {\nim} is a tuple of non-negative integers,
that in {\strnim} is a \emph{string}, a sequence of characters.
In every turn, each player \emph{shrinks} the string, by removing a substring repeating the same character.
For instance, for a position \texttt{abbbacc}, the possible successors are 
\texttt{bbbacc}, \texttt{abbacc}, \texttt{abacc}, \texttt{aacc}, \texttt{abbbcc}, \texttt{abbbac}, and \texttt{abbba}.
The \emph{empty string} $\varepsilon$, that is a string of length $0$, is the unique \emph{terminal position} in {\strnim}.
\strnim{} is a generalization of \nim{}.

As a first study of this new game, we show some sufficient conditions for the positions to be $\pp$s, particularly concerning strings whose prooperties are well studied in Stringology~\cite{lothaire1997combinatorics}.


\section{Preliminary}
For a set $X \subseteq \mathbb{N}$ of nonnegative integers, let $\text{mex}(X) = \min(\mathbb{N} \setminus X)$ be the smallest non-negative integer $x$ satisfying $x \notin X$.
For $l,r \in \mathbb{N}$ such that $l \le r$, let $[l,r]$ be the set $\{x \mid x \in \mathbb{N},\, l \le x \le r\}$ of integers in the interval.

For a set $\Sigma$ of characters, $\Sigma^*$ denotes the set of strings over $\Sigma$. 
For a string $s \in \Sigma^*$, 
let $|s|$ denote the length of $s$, and $s^R$ the reverse of $s$.
The $i$-th character of $s$ is denoted by $s[i]$ for $i = 1,2,\ldots,|\str|$, and $\str[i:j]$ denotes the \textit{substring} $s[i]s[i+1] \dots s[j]$, consisting of $s[i]$ through $s[j]$.
We often write a string in \emph{run-length-encoding}, e.g.~$\texttt{aabbba} = \texttt{a}^2\texttt{b}^3\texttt{a}^1$.
\textit{A morphism over} $\Sigma$ is a mapping $h: \Sigma^* \rightarrow \Sigma^*$ satisfying $h(uw) = h(u)h(w)$ for every $u,w \in \Sigma^*$. 
Note that a morphism is uniquely defined by the images $h(a)$ of all $a \in \Sigma$. 
A \emph{coding} is a morphism that satisfies $|h(a)| = 1$ for all $a \in \Sigma$.
For $L \subseteq \Sigma^*$, let $L^*$ be the set of strings consisting of the elements in $L$ connected zero or more times.



\section{The Rule of StrNim}

\begin{definition}
  (\strnim{}). A position of \strnim{} is a string. Each player, in turn, performs the following operation.
  The player who cannot perform the operation \emph{loses}.
  \begin{description}[topsep=4pt]
  \item[Operation] Select a non-empty substring that is a repetition of a single character, and remove it from the string.
  \end{description}
\end{definition}

For example, the game proceeds as follows, where each player removes the underlined substring.
\[
\mtt{aab\underline{bb}ca} \ope{1}
\mtt{aa\underline{b}ca} \ope{2}
\mtt{aa\underline{c}a} \ope{1}
\mtt{\underline{aaa}} \ope{2} \varepsilon
\]
In this progression, Player~1 loses since no operation is possible on the empty string $\varepsilon$.

For any position $\bm{x} = (x_1,x_2,\cdots,x_n)$ of {\nim}, let us consider a string $\str_{\bm{x}}=c_1^{x_1}c_2^{x_2}\cdots c_n^{x_n}$, where $c_i$'s are mutually distinct characters. 
Then it is easily verified that any possible progression from $\bm{x}$ in \nim{} coincides with that from $s_{\bm{x}}$ in \strnim{} and vice versa.
Therefore, {\strnim} fully subsumes {\nim}.


In contrast, in \strnim{}, two different parts may be occasionally \textit{merged} into one, e.g.~$\mtt{aa\underline{c}a} \rightarrow \mtt{aaa}$, which is the interesting characteristics of \strnim{} that does not happen in {\nim}.
It might appear resembling Amalgamation Nim~\cite{locke2021amalgamation}, a recent variant of {\nim}, where players are allowed to merge two piles into one.
However, these are different games. 
In Amalgamation Nim, players choose whether to remove or merge tokens, whereas in {\strnim}, merging occurs automatically in the process of removing tokens.


%
\section{Results}

In combinatorial games, a $\pp$ is a position such that the \emph{previous} player wins, while an $\np$ is a position such that the \emph{next} player wins.
In {\nim}, a very beautiful characterization is known~\cite{bouton1901nim}; \emph{a position $\bm{x}$ is a $\pp$ iff the binary digital sum of the integers $x_i$'s in $\bm{x}$ is $0$.}
Toward such a goal, we show some initial attempts in the sequel.
We denote the set of $\pp$s and $\mcal{N}$-positions in {\strnim} by $\ppos$ and $\npos$, respectively.
Let $\next(s)$ be the set of strings obtained by one operation on $s$.
By definition, for any $s\in \Sigma^*$, $s \in \ppos$ iff $s' \in \npos$ for all $s' \in \next(s)$.

\begin{lemma}
  \label{next}
  $|next(s)|=|s|$.
\end{lemma}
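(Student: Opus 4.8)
The plan is to pass to run-length encoding and exhibit an explicit parametrization of $\next(s)$, then show that parametrization is a bijection onto a set of size $|s|$. Write $s = c_1^{a_1} c_2^{a_2} \cdots c_m^{a_m}$ with $c_i \ne c_{i+1}$ for all $i$, so that $|s| = a_1 + \cdots + a_m$. Any legal operation removes a nonempty substring that repeats a single character, and such a substring, being contiguous and monochromatic, must lie entirely inside one maximal run. Since every character of a run is identical, the resulting string depends only on \emph{which} run is cut, say run $i$, and on \emph{how many} characters are removed, say $k$ with $1 \le k \le a_i$ — never on the exact placement of the cut within the run. Writing $r_{i,k}$ for this result, I would conclude that $\next(s) = \{\, r_{i,k} : 1 \le i \le m,\ 1 \le k \le a_i \,\}$, a set of at most $\sum_i a_i = |s|$ elements.

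It then remains to prove the $r_{i,k}$ are pairwise distinct, for which I would use two invariants. First, $|r_{i,k}| = |s| - k$, so $r_{i,k} = r_{i',k'}$ forces $k = k'$. Second, for moves sharing the same $k$, I would compare each $r_{i,k}$ with $s$ through the length of their longest common prefix, and claim this length equals $(a_1 + \cdots + a_i) - k$. Granting the claim, $r_{i,k} = r_{i',k}$ would force $a_1 + \cdots + a_i = a_1 + \cdots + a_{i'}$; as these partial sums are strictly increasing in the index, this gives $i = i'$, and distinctness follows, yielding $|\next(s)| = |s|$.

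The heart of the argument — and the step I expect to be the main obstacle — is the prefix formula, precisely because of the \emph{merging} phenomenon peculiar to \strnim{}. Removing (say) the last $k$ characters of run $i$ leaves the prefix up through the shortened run untouched, so $s$ and $r_{i,k}$ agree on their first $(a_1 + \cdots + a_i) - k$ characters; the content of the formula is that they then disagree. At that next position $r_{i,k}$ carries the character following run $i$ in $s$, namely $c_{i+1}$, whereas $s$ still has $c_i$ there (the position is inside run $i$ since $a_i - k + 1 \le a_i$), and $c_i \ne c_{i+1}$. The delicate cases are exactly those where this picture could break: (a) $k = a_i$, where the whole run vanishes and runs $i-1$ and $i+1$ may merge; and (b) $i = m$, where there is no following character. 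I would check that in case (a) the leftward merge does not lengthen the common prefix, since the character exposed at the critical position $a_1 + \cdots + a_{i-1} + 1$ is still $c_{i+1} \ne c_i$ as a \emph{raw} string regardless of how the run-length encoding recombines to its left; and that in case (b), $r_{m,k}$ is simply a proper prefix of $s$ of length $(a_1 + \cdots + a_m) - k$, again matching the formula. Once these boundary cases are settled the formula holds uniformly and the count is complete.
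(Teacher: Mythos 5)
Your proof is correct and follows essentially the same route as the paper: both parametrize $\next(s)$ by the run index and the number of characters removed, giving at most $|s|$ candidates. The paper simply asserts that these candidates are ``clearly'' pairwise distinct, whereas you supply the missing justification via the length and longest-common-prefix invariants (including the merging boundary cases), which is a genuine strengthening of the write-up rather than a different approach.
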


\begin{proof}

  Let $\str=c_1^{x_1}c_2^{x_2}\cdots c_n^{x_n}$ be any position such that $c_i \neq c_{i+1} \in \Sigma$ and $x_i \geq 1$.
  Then, $\next(s)= \{s_{i,k} \mid 1 \le i \le n \text{ and } 1 \le k \le x_i  \}$ where $s_{i,k}=c_1^{x_1}c_2^{x_2}\cdots  c_i^{x_i-k} \cdots c_n^{x_n}$ is obtained by removing $c_i^k$ from $s$.
  Clearly $s_{i,k}$ are pairwise distinct. Therefore, $|\next(s)|=|s|$.
\end{proof}

For example, $\next(\texttt{abbccb}) = \{\texttt{bbccb}, \texttt{abccb}, \texttt{accb}, \texttt{abbcb}, \texttt{abbb}, \texttt{abbcc}\}$.

\begin{lemma}\label{lem:one}
  If $x c^i y, x c^j y \in \ppos$ with $c \in \Sigma$ and $x,y \in \Sigma^*$, then $i=j$.
\end{lemma}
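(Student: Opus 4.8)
The plan is to use directly the characterization stated just above the lemma: for any $s$, $s \in \ppos$ iff every $s' \in \next(s)$ lies in $\npos$. I would argue by contradiction. The statement is symmetric in $i$ and $j$, so without loss of generality assume $i \le j$, and suppose for contradiction that $i < j$.

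The key step is to observe that $xc^iy$ is reachable from $xc^jy$ in a single operation. The block of $j$ consecutive $c$'s in $xc^jy$ contains $c^{j-i}$ as a substring, and since $j - i \ge 1$ this is a non-empty repetition of the single character $c$. Removing it is a legal move, and the result is exactly $xc^iy$. Hence $xc^iy \in \next(xc^jy)$.

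I would then apply the definition. Since $xc^jy \in \ppos$, every element of $\next(xc^jy)$ belongs to $\npos$; in particular $xc^iy \in \npos$. This contradicts the hypothesis $xc^iy \in \ppos$, because $\ppos$ and $\npos$ are disjoint. Therefore $i < j$ is impossible, forcing $i = j$.

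The only point requiring care, and the closest thing to an obstacle, is verifying that the deletion genuinely produces $xc^iy$ even when $x$ ends with $c$ or $y$ begins with $c$, so that the explicit $c$-block merges with adjacent $c$'s. Writing $x = x' c^a$ and $y = c^b y'$ with $x'$ not ending in $c$ and $y'$ not beginning with $c$, the two strings become $x' c^{a+i+b} y'$ and $x' c^{a+j+b} y'$; deleting $c^{j-i}$ copies of $c$ from the single long $c$-run of the latter still yields the former. Thus the move stays valid and the conclusion is unaffected by merging.
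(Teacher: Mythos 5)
Your proof is correct and follows essentially the same argument as the paper: assuming $i<j$, observe that $xc^iy \in \next(xc^jy)$, so $xc^jy \in \ppos$ forces $xc^iy \notin \ppos$. The extra care you take about $c$-blocks merging with adjacent $c$'s in $x$ or $y$ is a nice touch but does not change the argument.
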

\begin{proof}
  Suppose $i <j$ and $x c^j y \in \ppos$.
  Then $x c^i y \in \next(x c^j y)$ and thus $x c^i y \notin \ppos$.
\end{proof}

\begin{theorem}
  \label{onlyp}
  Let $s=c_1^{x_1}c_2^{x_2}\cdots c_n^{x_n}$ be any position such that $c_i \neq c_{i+1} \in \Sigma$ and $x_i \geq 1$.
  For any character $c \in \Sigma$ with $c \neq c_n$,
  there exists $y \in [0,|s|]$ such that $s c^{x} \in \ppos$ iff $x=y$.
\end{theorem}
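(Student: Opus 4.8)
The plan is to split the statement into two independent tasks---\emph{uniqueness} (at most one such $x$ exists) and \emph{existence with the bound} (at least one such $x$ exists, and it lies in $[0,|s|]$)---since the two preceding lemmas are tailored to exactly these. Throughout I use that the game terminates (each move strictly shortens the string), so every position is either a $\pp$ or an $\np$, and an $\np$ always has a successor in $\ppos$.

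For uniqueness I would invoke Lemma~\ref{lem:one} directly. Because $c \neq c_n$, for every $x$ the string $sc^x$ genuinely ends with a fresh maximal run $c^x$ and no merging occurs, so $sc^x$ is the instance of $xc^iy$ with prefix $s$, letter $c$, and empty suffix. Thus $sc^i,sc^j \in \ppos$ forces $i=j$, i.e.\ at most one $x\ge 0$ yields a $\pp$. It then remains only to exhibit one such $x$ inside $[0,|s|]$.

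For existence and the bound simultaneously I would argue by contradiction with a pigeonhole count calibrated to Lemma~\ref{next}. Suppose all $|s|+1$ positions $sc^0,sc^1,\ldots,sc^{|s|}$ were $\np$s. First observe that, since $c\neq c_n$, no single operation can straddle the boundary between the final run $c_n^{x_n}$ of $s$ and the appended block $c^x$; hence every move on $sc^x$ is either (i) entirely inside the appended block, yielding $sc^{x'}$ with $0\le x'<x$, or (ii) entirely inside the $s$-part, yielding $t\,c^x$ for some $t\in\next(s)$. For $x\le|s|$, any type-(i) outcome is some $sc^{x'}$ with $x'<x\le|s|$, which is an $\np$ by assumption; therefore the winning move guaranteed by $sc^x\in\npos$ must be of type (ii), landing on a $\pp$ of the form $t_x c^x$ with $t_x\in\next(s)$. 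Now there are $|s|+1$ indices $x\in\{0,\ldots,|s|\}$ but only $|\next(s)|=|s|$ possible values of $t_x$ by Lemma~\ref{next}, so pigeonhole gives distinct $x_1\neq x_2$ with $t_{x_1}=t_{x_2}=t$, whence $tc^{x_1},tc^{x_2}\in\ppos$.

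To finish I would apply Lemma~\ref{lem:one} to these two positions, after handling the one genuinely \strnim-specific subtlety: $t$ may itself end in $c$ (precisely when $t$ arises by deleting the whole run $c_n^{x_n}$ with $c_{n-1}=c$), so $tc^{x_i}$ merges. Writing $t=uc^m$ with $u$ not ending in $c$ and $m\ge 0$, both positions reduce to $uc^{m+x_1}$ and $uc^{m+x_2}$, and Lemma~\ref{lem:one} (prefix $u$, letter $c$, empty suffix) forces $m+x_1=m+x_2$, i.e.\ $x_1=x_2$, a contradiction. Hence at least one of $sc^0,\ldots,sc^{|s|}$ is a $\pp$; combined with uniqueness, this single index is the desired $y\in[0,|s|]$. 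I expect the main obstacle to be the structural step that, once all lower-exponent positions are (assumed to be) $\np$s, a winning move is \emph{forced} into the $s$-part, together with the merging bookkeeping $t=uc^m$ needed before Lemma~\ref{lem:one} applies; the counting itself is then immediate from Lemma~\ref{next}.
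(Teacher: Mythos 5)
Your proposal is correct and follows essentially the same route as the paper: uniqueness comes from Lemma~\ref{lem:one}, and existence is shown by assuming all $|s|+1$ positions $sc^0,\dots,sc^{|s|}$ are $\np$s, observing that each forced winning move must land in the $s$-part as some $t_x c^x$ with $t_x \in \next(s)$, and applying the pigeonhole principle against $|\next(s)|=|s|$ to contradict Lemma~\ref{lem:one}. Your extra bookkeeping for the case where $t$ ends in $c$ is harmless but not needed, since Lemma~\ref{lem:one} applies verbatim with prefix $t$ and empty suffix regardless of merging.
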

\begin{proof}
By Lemma~\ref{lem:one}, it suffices to show that there exists $x \in [0,|s|]$ such that $s c^{x} \in \ppos$.
To derive a contradiction, assume otherwise. 
For every $i\in [0,|s|]$, since $s  c^i \notin \ppos$, there exists $t_i \in \next(s  c^i) \cap \ppos$.
Since $s c^j \notin \ppos$ for any $j < i$, $t_i$ must be of the form $t_i=s_i  c^i$ for some $s_i \in \next(s)$.
Since $|\next(s)|=|s|$ by Lemma~\ref{next}, there exist two distinct integers $i,j \in [0,|s|]$ such that $s_i = s_j$ by the pigeonhole principle.
It follows that $s_i c^i, s_i c^j \in \ppos$, which contradicts Lemma~\ref{lem:one}. 
\end{proof}
\begin{corollary}
  A position $c_1^{x_1}c_2^{x_2}\cdots c_n^{x_n}$ is an $\mcal{N}$-position if $x_1 + x_2 + \cdots + x_{n-1} < x_n$ holds.
\end{corollary}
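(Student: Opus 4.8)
The plan is to read off the result directly from Theorem~\ref{onlyp}, by regarding the position $s=c_1^{x_1}c_2^{x_2}\cdots c_n^{x_n}$ as a fixed prefix followed by a long run of a single character. First I would split $s$ as $s = s'\,c_n^{x_n}$, where $s' = c_1^{x_1}c_2^{x_2}\cdots c_{n-1}^{x_{n-1}}$ is obtained by deleting the final block. Observe that $s'$ still satisfies the hypotheses of Theorem~\ref{onlyp}: adjacent characters differ and every exponent is positive, and in particular $s'$ ends in the character $c_{n-1}\neq c_n$.

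Next I would apply Theorem~\ref{onlyp} to the position $s'$ with the character $c_n$. Since $c_n \neq c_{n-1}$, the theorem supplies a unique $y \in [0,|s'|]$ such that $s' c_n^{x} \in \ppos$ if and only if $x = y$. Informally, appending a run of $c_n$ to $s'$ produces a $\pp$ for exactly one length $y$, and that length can be no larger than $|s'|$; every strictly longer run yields an $\mcal{N}$-position.

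The decisive observation is that the hypothesis $x_1 + x_2 + \cdots + x_{n-1} < x_n$ is precisely the statement $|s'| < x_n$. Hence $y \le |s'| < x_n$, so $x_n \neq y$, and therefore $s = s'\,c_n^{x_n} \notin \ppos$, i.e.\ $s \in \npos$, which is the claim.

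Because Theorem~\ref{onlyp} does all the real work, I do not expect a genuine obstacle; the one point that demands a moment's care is the boundary case $n=1$, where $s' = \varepsilon$ and the prefix sum is empty. There Theorem~\ref{onlyp} does not literally apply (there is no character $c_{n-1}$ to differ from), but the claim reduces to the trivial fact that $c_1^{x_1}$ with $x_1 \ge 1$ is an $\mcal{N}$-position, the next player simply removing the whole run; this is consistent with taking $y = 0 < x_1$.
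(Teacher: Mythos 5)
Your proof is correct and is precisely the argument the paper intends: the corollary is stated without proof as an immediate consequence of Theorem~\ref{onlyp}, applied to the prefix $s' = c_1^{x_1}\cdots c_{n-1}^{x_{n-1}}$ with the appended character $c_n$, so that the unique $y$ with $s'c_n^{y}\in\ppos$ satisfies $y \le |s'| < x_n$. Your explicit treatment of the $n=1$ boundary case is a sensible extra precaution that the paper does not bother to mention.
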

For example, $\texttt{abbccaaaaaa} \notin \ppos$.

\subsection{Non-context-freeness}

\begin{theorem}
  The set $\ppos \cap \{ \mtt{a},\mtt{b},\mtt{c}\}^*$ is not context-free.
\end{theorem}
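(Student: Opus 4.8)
The plan is to use that the context-free languages are closed under intersection with regular languages: if $\ppos\cap\{\mtt a,\mtt b,\mtt c\}^*$ were context-free, then so would be its intersection with any regular set. I would therefore fix the regular language $R=\mtt a\,\mtt a^*\,\mtt b\,\mtt b\,\mtt b\,\mtt b^*\,\mtt a\,\mtt a^*$, i.e.\ the strings of the form $\aba$ with $i,k\ge 1$ and $j\ge 3$, and reduce everything to showing that $\ppos\cap R$ is not context-free. The argument then has two halves: a combinatorial-game half that determines $\ppos\cap R$ exactly, and a formal-language half that proves the resulting set is not context-free.

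For the first half I would prove that, for $i,k\ge 1$ with $i\ne k$,
\[
\aba\in\ppos \iff j=i+k,
\]
whereas for $i=k$ the unique $\mcal P$-value of $j$ (supplied by Lemma~\ref{lem:one}) is $1$ when $i\ge 2$ and $2$ when $i=1$, so in either case it is smaller than $3$. Consequently
\[
\ppos\cap R=M:=\{\mtt a^i\mtt b^{\,i+k}\mtt a^k : i,k\ge 1,\ i\ne k\}.
\]
I would establish the characterization by induction on $|\aba|$, using Lemma~\ref{lem:one} to know that each pair $(i,k)$ has at most one $\mcal P$-value of $j$. The moves out of $\aba$ shorten one of the two $\mtt a$-runs, shorten the $\mtt b$-run, or delete the $\mtt b$-run entirely; the last move \emph{merges} the two $\mtt a$-runs into the single run $\mtt a^{\,i+k}$, which is the feature absent from \nim. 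For the forward direction, from a string with $j=i+k$ and $i\ne k$ every move either breaks the equality $j=i+k$ (on shortening a run) or reaches the nonempty run $\mtt a^{\,i+k}$, so by the induction hypothesis every option is an $\np$. The reverse direction is a case analysis exhibiting, from any string not of the stated form, a move to a $\pp$; its delicate case is the symmetric one $i=k$, where the balance $j=2i$ is useless because deleting the $\mtt b$-run yields $\mtt a^{2i}$ (an $\np$) rather than two equal heaps, and one must instead drop the $\mtt b$-run down to the low-exponent diagonal position $\mtt a^i\mtt b\,\mtt a^i$. Exactly this merge effect removes the diagonal $\{\mtt a^i\mtt b^{2i}\mtt a^i\}$ from $\ppos$, and it is what makes $M$ non-context-free.

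It remains to show $M$ is not context-free. A plain pumping argument cannot succeed: dropping the restriction $i\ne k$ gives the context-free language $\{\mtt a^i\mtt b^i\}\cdot\{\mtt b^k\mtt a^k\}$, and one can always pump a string of $M$ by inflating the leading $\mtt a$-run and the $\mtt b$-run together, keeping $k$ and the nonzero gap $i-k$ fixed, so as to stay inside $M$. Instead I would appeal to the theory of bounded context-free languages: since $M\subseteq\mtt a^*\mtt b^*\mtt a^*$, if $M$ were context-free then by Ginsburg's theorem its set of exponent triples $\{(i,i+k,k):i\ne k\}$ would be a finite union of stratified linear sets. Each such set lies in the plane where the middle coordinate equals the sum of the outer two, so its period vectors satisfy the same relation and, being stratified, have at most two nonzero coordinates; hence they are built from $(1,1,0)$ and $(0,1,1)$, and after projecting away the determined middle coordinate each piece becomes a product $P\times Q$ of one-dimensional linear sets. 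Thus context-freeness of $M$ would force $\{(i,k):i,k\ge 1,\ i\ne k\}$ to be a finite union of products $\bigcup_{t=1}^{N}P_t\times Q_t$. This is impossible: for each $m$ the row $i=m$ forces $\bigcup_{t:\,m\in P_t}Q_t=\mathbb{N}_{\ge 1}\setminus\{m\}$, and since there are only finitely many subsets of $\{1,\dots,N\}$, two distinct values $m\ne m'$ must give the same index set and hence $\mathbb{N}_{\ge 1}\setminus\{m\}=\mathbb{N}_{\ge 1}\setminus\{m'\}$, a contradiction. Therefore $M$, and with it $\ppos\cap\{\mtt a,\mtt b,\mtt c\}^*$, is not context-free.

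I expect the main obstacle to be the game-theoretic characterization of the second paragraph — in particular verifying the reverse direction and pinning down that the symmetric strings $\mtt a^i\mtt b^{2i}\mtt a^i$ are $\np$s — since the formal-language half is then a fairly routine (if deliberately non-pumping) bounded-language argument.
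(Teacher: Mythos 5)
There is a genuine gap, and it is exactly where you predicted: the game-theoretic characterization in your second paragraph is false. You claim that for $i,k\ge 1$ with $i\ne k$ the unique $\mcal{P}$-value of $j$ for $\mtt{a}^i\mtt{b}^j\mtt{a}^k$ is $j=i+k$. This already fails at $(i,k)=(3,4)$: the position $\mtt{a}^3\mtt{b}^2\mtt{a}^4$ is a $\pp$ --- this is the entry $\intpair{3,4}\in L(2)$ in the paper's table, and one can check it directly: every move lands on an unequal two-pile \nim{} position, on $\mtt{a}^7$, or on a position $\mtt{a}^{i'}\mtt{b}^{j'}\mtt{a}^{k'}$ with $j'\in\{1,2\}$ whose pair is not in $L(1)$ or $L(2)$. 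Hence by Lemma~\ref{lem:one} the unique $\mcal{P}$-value for $(3,4)$ is $j=2$, not $7$; equivalently $\mtt{a}^3\mtt{b}^7\mtt{a}^4$ is an $\np$ because it has the move to $\mtt{a}^3\mtt{b}^2\mtt{a}^4$. The reverse direction of your induction is where this breaks: when $j<\min(i,k)$ you cannot rebalance to $i'+k'=j$ by shortening a single $\mtt{a}$-run, and it does not follow that some other move reaches a $\pp$ --- for $\mtt{a}^3\mtt{b}^2\mtt{a}^4$ none does. The true structure of $\ppos\cap\mtt{a}^*\mtt{b}^*\mtt{a}^*$ is the intricate eventually-periodic family of the paper's theorems on $L(j)$, not the single plane $j=i+k$; so $\ppos\cap R\ne M$, and your bounded-language argument (which looks sound in itself) is applied to the wrong set.

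The fix is to intersect with a regular set on which merging cannot occur at all, so that no game analysis is needed. The paper uses $\mtt{a}^*\mtt{b}^*\mtt{c}^*$: there the game is literally three-pile \nim, Bouton's theorem gives the exact characterization $i\oplus j\oplus k=0$ for free, and a carefully tuned pumping argument on $\mtt{a}^{3\cdot 2^p}\mtt{b}^{5\cdot 2^p}\mtt{c}^{6\cdot 2^p}$ finishes the proof. Your Ginsburg-style argument on exponent vectors could plausibly replace the pumping step if applied to $\{\intpair{i,j,k}\mid i\oplus j\oplus k=0\}$, but the combinatorial object you must analyze is the XOR variety, not $j=i+k$.
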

\begin{proof}
  Since context-free languages are closed under intersection with regular sets,
  it is enough to show that $L_1 = \ppos \cap \mtt{a}^*\mtt{b}^*\mtt{c}^*$ is not context-free.
  By Bouton's theorem~\cite{bouton1901nim},
  \[
    L_1 = \{\, \mtt{a}^i \mtt{b}^j \mtt{c}^k \mid i \oplus j \oplus k = 0\,\}\,, 
  \]
  where $\oplus$ denotes the binary digital sum:
  \[
    \Big(\sum_{i=0}^n a_i2^i\Big) \oplus \Big(\sum_{i=0}^n b_i2^i\Big) = \sum_{i=0}^n (a_i \oplus b_i) 2^i
  \]
  holds for $a_0,b_0,\dots,a_n,b_n \in \{0,1\}$ where $\oplus$ is the exclusive disjunction.

  Suppose $L_1$ is context-free.
  Applying the pumping lemma to $w = \mtt{a}^{3 \cdot 2^p} \mtt{b}^{5 \cdot 2^p} \mtt{c}^{6 \cdot 2^p} \in L_1$, we have $u,v,x,y,z \in \Sigma^*$ such that
  \begin{itemize}
  \item $
    xuyvz = \mtt{a}^{3 \cdot 2^p} \mtt{b}^{5 \cdot 2^p} \mtt{c}^{6 \cdot 2^p} 
  $,
  \item $|uyv| < p$,
  \item $|uv| \ge 1$,
  \item $xu^iyv^iz \in L_1$ for all $i \in \mathbb{N}$
  \end{itemize}
  where $p$ is a large number.
  Notice that
  \[
   {\#_\mtt{a}(w)/2^p} = (011)_2\,,\quad
   {\#_\mtt{b}(w)/2^p} = (101)_2\,,\quad
   {\#_\mtt{c}(w)/2^p} = (110)_2\,,
  \] 
  where $\#_a(w)$ denotes the number of occurrences of $a \in\Sigma$ in $w$.
  One can easily see that there are $a,b \in \{\mtt{a},\mtt{b},\mtt{c}\}$ such that $a \ne b$, $u \in a^*$, $v \in b^*$, and $|u|=|v|$.
  Otherwise, $x u^2 y v^2 z \notin L_1$.
  By $|u| < p/2$, one can find an integer $k$ such that $2^p \le k |u| < 2^{p+1}$.
  Let $q = k|u|$.
  Then,
  \[
    \lrfloor{\frac{\#_\mtt{a}(w) + q}{2^p}} = (100)_2,\ 
    \lrfloor{\frac{\#_\mtt{b}(w) + q}{2^p}} = (110)_2,\ 
    \lrfloor{\frac{\#_\mtt{c}(w) + q}{2^p}} = (111)_2.
  \] 
  Let us consider $w' = xu^{k+1}yv^{k+1}z$.
  If $a = \mtt{a}$ and $b = \mtt{b}$, then $\#_\mtt{a}(w') = \#_\mtt{a}(w)+q$,  $\#_\mtt{b}(w') = \#_\mtt{b}(w)+q$,  $\#_\mtt{c}(w') = \#_\mtt{c}(w)$, which implies $w' \notin L_1$.
  If $a = \mtt{a}$ and $b = \mtt{c}$, then $\#_\mtt{a}(w') = \#_\mtt{a}(w)+q$,  $\#_\mtt{b}(w') = \#_\mtt{b}(w)$,  $\#_\mtt{c}(w') = \#_\mtt{c}(w)+q$, which implies $w' \notin L_1$.
  If $a = \mtt{b}$ and $b = \mtt{c}$, then $\#_\mtt{a}(w') = \#_\mtt{a}(w)$,  $\#_\mtt{b}(w') = \#_\mtt{b}(w)+q$,  $\#_\mtt{c}(w') = \#_\mtt{c}(w)+q$, which implies $w' \notin L_1$.
\end{proof}

\subsection{The form of $\aba$}


This section shows a periodic property of strings in $\mtt{a}^*\mtt{b}^*\mtt{a}^* \cap \ppos$.
Strings of the form $\aba$ are simplest strings that distinguish \strnim{} from \nim{}, where two non-adjacent sections can be merged.\footnote{
  The \strnim{} with positions of the form $\aba$ is equivalent to a specific kind of positions of \tsc{Goishi Hiroi}, which Abuku et al.~\cite{AbukuFKS2024} studied independently.
  We are grateful to them for notifying us of the connection between the two games.
}

For a set $S \subseteq \mathbb{N} \times \mathbb{N}$ of integer pairs and $p \in \mathbb{N}$,
let $S + p \mathbb{N} = \{\intpair{a+kp,b+kp} \mid \intpair{a,b} \in S,\, k \in \mathbb{N} \}$.
We say that a set $T \subseteq \mathbb{N} \times \mathbb{N}$ is \textit{periodic} if there exist two finite subsets $T_1,T_2 \subseteq T$ and an integer $p \geq 1$ such that $T = T_1 \cup (T_2 + p \mathbb{N})$.
We call $p$ the \textit{period} of $T$.
We define $L(j)= \{\intpair{i,k} \mid \aba \in \ppos,\ i \leq k \}$.

\begin{theorem}
  For any $j \geq 1$, $L(j)$ is periodic.
\end{theorem}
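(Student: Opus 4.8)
The plan is to reduce the statement to an eventual-periodicity property of a single involution on $\mathbb{N}$, and then to extract that periodicity from a finite-automaton argument. First I would record two structural facts. By Theorem~\ref{onlyp} (applied to the prefix $s=\mtt{a}^i\mtt{b}^j$ and the character $\mtt{a}\neq\mtt{b}$), for each $i\in\mathbb{N}$ there is a \emph{unique} $k$, call it $f(i,j)$, with $\mtt{a}^i\mtt{b}^j\mtt{a}^{k}\in\ppos$, and moreover $f(i,j)\le i+j$. Since the reversal $s\mapsto s^R$ is an automorphism of the game (it sends $\aba$ to $\mtt{a}^k\mtt{b}^j\mtt{a}^i$ and commutes with the move relation), $f(\cdot,j)$ is an involution: $f(i,j)=k$ iff $f(k,j)=i$. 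Combining the involution with $f(i,j)\le i+j$ yields the key \emph{diagonal bound} $|f(i,j)-i|\le j$. In this language $L(j)=\{\intpair{i,f(i,j)}\mid i\le f(i,j)\}$, and $L(j)$ is periodic exactly when the displacement $d_j(i):=f(i,j)-i$ is eventually periodic in $i$; so the theorem reduces to proving eventual periodicity of $d_j$.

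Next I would derive a mex-recurrence for $f$. Enumerating the moves from $\aba$ (shrink the left block, shrink the right block, shorten the middle block to a still non-empty one, or delete the whole middle block and merge to $\mtt{a}^{i+k}$) and using that a single pile $\mtt{a}^m$ is a $\pp$ iff $m=0$, one checks that $\aba\in\ppos$ precisely when $k$ avoids all of $\{f(i',j):i'<i\}$, all of $\{f(i,j'):1\le j'<j\}$, and the value $0$ in case $i=0$; the smallest such $k$ is automatically the unique $\pp$ of its column. Hence
\[
  f(i,j)=\operatorname{mex}\Big(\{f(i',j):i'<i\}\cup\{f(i,j'):1\le j'<j\}\cup D_i\Big),
\]
with $D_i=\{0\}$ if $i=0$ and $D_i=\emptyset$ otherwise. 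I would then prove the theorem by induction on $j$, the hypothesis being that each $d_{j'}$ with $j'<j$ is eventually periodic, so that a common period $P$ and a common threshold exist for all of them; consequently the offset set $\rho(i):=\{f(i,j')-i:1\le j'<j\}\subseteq\{-(j-1),\dots,j-1\}$ depends, for large $i$, only on $i\bmod P$.

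The heart of the argument is to show that the contribution of the previously determined columns can be tracked by a \emph{finite} state. Using the diagonal bound together with the involution, I would prove that $\{f(i',j):i'<i\}$ contains every value $\le i-1-j$ and no value $>i+j-1$; thus only the membership of the $2j$ values in the window $[i-j,\,i+j-1]$ is undetermined. Encoding this as $S_i=\{f(i',j)-i:i'<i,\ f(i',j)\ge i-j\}\subseteq\{-j,\dots,j-1\}$, the recurrence becomes an offset-mex, $d_j(i)=\min\{o\in\{-j,\dots,j\}:o\notin S_i\cup\rho(i)\}$, while $S_{i+1}$ is obtained from $S_i$ by inserting $d_j(i)$, shifting by $-1$, and discarding offsets below $-j$. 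Hence, past the threshold, the pair $(S_i,\ i\bmod P)$ evolves under a \emph{deterministic} transition on a finite state space of size at most $2^{2j}\cdot P$; by the pigeonhole principle the sequence of states is eventually periodic, and since $d_j(i)$ is a function of $(S_i,\rho(i))$, the displacement $d_j$ is eventually periodic, closing the induction (the base case $j=1$ has $\rho\equiv\emptyset$).

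The main obstacle is exactly this finiteness of state: a priori the mex at column $i$ depends on the entire history of previously determined values, an unbounded amount of information. What rescues the argument is the diagonal bound $|f(i,j)-i|\le j$, itself a consequence of Theorem~\ref{onlyp} and reversal symmetry, which confines the involution to a bounded band about the diagonal and thereby collapses the relevant history into a bounded window. Some care is also needed to handle the boundary term $D_i$ and the first few columns, and to verify that ``$L(j)$ periodic'' and ``$d_j$ eventually periodic'' are genuinely equivalent (using that the half-plane $\{i\le k\}$ is invariant under translation by $\intpair{p,p}$); but these are routine once the band structure is established.
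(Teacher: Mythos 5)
Your proposal is correct and follows essentially the same route as the paper's proof: induction on $j$, a mex recurrence for the unique $k$ with $\mtt{a}^i\mtt{b}^j\mtt{a}^k\in\ppos$, the diagonal band $|f(i,j)-i|\le j$ obtained from Theorem~\ref{onlyp} together with reversal symmetry, and a pigeonhole argument on the finitely many possible states of the width-$2j$ window around the diagonal. The differences are presentational rather than substantive: you package the window evolution as a deterministic finite-state transition whose orbit must be eventually periodic, whereas the paper pins down an explicit $i_0$ and period $p=np'$ by pigeonhole and then propagates the relation $A'(i,j)+p=A'(i+p,j)$ by hand; you are also more explicit about the reversal involution behind the lower bound $f(i,j)\ge i-j$ and about the boundary term at $i=0$ (the merge move to $\mtt{a}^{i+k}$), which the paper's stated recurrence silently omits but which does not affect eventual periodicity.
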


\begin{proof}
  We prove this theorem by induction on $j$.
  We first show that $L(1) = \{\intpair{0,1}\} \cup (\{\intpair{2,2}\} + \mathbb{N})$.
  Clearly $\texttt{ba} \in \ppos$. 
  We show $\texttt{a}^i \texttt{ba}^i \in \ppos$ for $i \ge 2$ by induction on $i$.
  If a player takes $\texttt{b}$ in the middle of $\mtt{a}^i \mtt{ba}^i$, the resultant string $\mtt{a}^{2i}$ is obiviously a $\mcal{P}$-position.
  If he removes a substring consisting of $\mtt{a}$, the resultant string is $\mtt{a}^i \mtt{b} \mtt{a}^k$ or $\mtt{a}^k \mtt{b} \mtt{a}^i$ for some $k< i$.
  For the symmetry, we may assume it is the former.
  If $k=1$, the opponent can make it into $\mtt{ba} \in \ppos$.
  If $k>1$, the opponent can make it into $\mtt{a}^{k}\mtt{ba}^{k}$, which is also a $\mcal{P}$-position by induction hypothesis.

  We show that $L(j)$ is periodic for $j \ge 2$.
  Let $\tilde{L}(j) = L(j) \cup \{(k,i) \mid (i,k) \in L(j)\}$.
  It is equivalent for $L(j)$ be periodic and for $\tilde{L}(j)$ be periodic.
  Let $\lose{j}{i}$ denote the value of $k$ such that $\aba \in \ppos$ (which is uniquely determined by Theorem~\ref{onlyp}). Note that $\tilde{L}(j) = \{\intpair{i, Lose(j)[i]} \mid i \ge 0 \}$ holds.
  Define $A(i,j) = \{\lose{j}{t} \mid 0 \le t < i \}$ and $B(i,j) = \{\lose{t}{i} \mid 1 \le t < j \}$.
  Then, $\lose{j}{i} = \text{mex}(A(i,j) \cup B(i,j))$ holds.
  From Theorem~\ref{onlyp}, $\lose{j}{i} \in [i-j,i+j]$.
  We define $A'(i,j) = A(i,j) \cap [i-j,i+j]$, and $B'(i,j) = B(i,j) \cap [i-j,i+j]$
  then the following equation holds.
  \begin{align}
    \label{losemex}
    \lose{j}{i} = \text{mex}([0,i-j-1] \cup A'(i,j) \cup B'(i,j))
  \end{align}
  For a set $S \subseteq \mathbb{N}$ of integers and $p \in \mathbb{N}$,
  let $S + p = \{x + p \mid x \in S \}$.
  By induction hypothesis, for each $t < j$, let $p_t$ be the period of $\tilde{L}(t)$. 
  Let $p'$ be least common multiple of $p_1,p_2,\cdots,p_{j-1}$.
  Since $Lose(t)[i] + p_t = Lose(t)[i+p_t]$,
  there exist integer $s'$ such that for any $i \ge s'$, the following equation holds.
  \begin{align}
    \label{bperiod}
    B'(i,j) +p' = B'(i+p',j)
  \end{align}
  Then, for all $n$, $p = n p'$ and for any $i \ge s$, $B'(i,j) +p = B'(i+p,j)$.
  Since $A'(i,j) \subseteq [i-j,i+j]$, the set $\{A'(i,j) - i \mid i = p'm , m \in \mathbb{N}\} \subseteq [-j,j]$ is finite.
  Thus, there exist integers $n$ and $i_0$ such that $p = n p'$ and the following equation holds.
  \begin{align}
    \label{asame}
    A'(i_0,j)-i_0 &= A'(i_0+p,j) - (i_0 + p) \notag \\
    \Leftrightarrow A'(i_0,j) + p &= A'(i_0+p,j)
  \end{align}
  From Eqs.\ (\ref{losemex}), (\ref{bperiod}), and (\ref{asame}), the following equation holds.
  \begin{align}
    \label{loseperiod}
    Lose(j)[i_0+p] &= \text{mex}([0,i_0+p-j-1] \cup A'(i_0+p,j) \cup B'(i_0+p,j)) \notag \\
    &= \text{mex}([0,i_0+p-j-1] \cup (A'(i_0,j)+p) \cup (B'(i_0,j)+p)) \notag \\
    &= Lose(j)[i_0]+p
  \end{align}
  From (\ref{asame}) and (\ref{loseperiod}), since the following equations hold, $A'(i_0+1,j)+p= A'(i_0+p+1,j)$.
  \begin{align*}
    A'(i_0+1,j) &= A(i_0+1,j) \cap [i_0+1-j,i_0+1+j]\\
    &= (A(i_0,j) \cup \{Lose(j)[i_0]\}) \cap [i_0+1-j,i_0+1+j]\\
    &= (A'(i_0,j) \cup \{Lose(j)[i_0]\}) \setminus \{i_0-j\}
  \end{align*}
  \begin{align*}
    A'(i_0+p+1,j) &= \phantom{(}(A'(i_0+p,j)\phantom{)} \cup \phantom{(}\{Lose(j)[i_0+p]\})\phantom{)} \setminus \{i_0+p-j\} \\
    &= ((A'(i_0,j)+p) \cup (\{Lose(j)[i_0]\}+p)) \setminus \{i_0+p-j\} \\
    &= ((A'(i_0,j) \cup \{Lose(j)[i_0]\}) \setminus \{i_0-j\}) + p\\
    &= A'(i_0+1,j) + p
  \end{align*}
  Thus, for any $i \ge i_0$, $A'(i,j)+p = A'(i+p,j)$ and $Lose(j)[i+p] = Lose(j)[i]+p$.
  Therefore, $\tilde{L}(j) = \{\intpair{i, Lose(j)[i]} \mid i \ge 0 \}$ is periodic.
\end{proof}


\begin{theorem}
  For $1 \leq j \leq 6$, $L(j)$ are given as follows:

  \begin{itemize}[noitemsep]
    \item $L(1) = \{\intpair{0,1}\} \cup (\{\intpair{2,2}\} + \mathbb{N})$

    \item $L(2) = \{\intpair{0,2}, \intpair{1,1}\} \cup (\{\intpair{3,4}\} + 2 \mathbb{N})$

    \item $L(3) = \{\intpair{0,3}, \intpair{1,2}\} \cup (\{\intpair{4,5}\} + 2 \mathbb{N})$

    \item $L(4) = \{\intpair{0,4}, \intpair{1,3}, \intpair{2,5}\} \cup (\{\intpair{6,8}, \intpair{7,9}\} + 4 \mathbb{N})$

    \item $L(5) = \{\intpair{0,5}, \intpair{1,4}, \intpair{2,3}, \intpair{6,9}, \intpair{7,10}, \intpair{8,11}\} \\ \phantom{L(0) = } \cup (\{\intpair{12,14}, \intpair{13,15}\} + 4 \mathbb{N})$

    \item $L(6) = \{\intpair{0,6}, \intpair{1,5}, \intpair{2,4}, \intpair{3,7}, \intpair{8,10}, \intpair{9,11}\} \\ \phantom{L(0) = } \cup (\{\intpair{12,15}, \intpair{13,16}, \intpair{14,17}\} + 6 \mathbb{N})$
  \end{itemize}
\end{theorem}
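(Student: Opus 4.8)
The plan is to prove all six formulas simultaneously by induction on $j$, using the mex recurrence~(\ref{losemex}) that determines $Lose(j)[i]$ from the two auxiliary sets $A'(i,j)$ and $B'(i,j)$. The case $j=1$ is exactly the computation already carried out in the proof of the preceding periodicity theorem, so it serves as the base. For the inductive step, suppose $L(1),\dots,L(j-1)$ are as claimed; equivalently, the functions $i\mapsto Lose(t)[i]$ are known for all $t<j$ (by reading $Lose(t)[i]=k$ off each pair $\intpair{i,k}\in\tilde L(t)$), and hence so is $B'(i,j)=\{Lose(t)[i]\mid 1\le t<j\}\cap[i-j,i+j]$ for every $i$. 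It then remains to compute $i\mapsto Lose(j)[i]$ via~(\ref{losemex}) and match it against the function $f_j$ encoded by the claimed description of $L(j)$.

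The structural observation that makes this a finite task is that~(\ref{losemex}) has bounded look-back. By the bound $Lose(j)[t]\in[t-j,t+j]$ from Theorem~\ref{onlyp}, a value $Lose(j)[t]$ with $t<i$ can lie in the window $[i-j,i+j]$ only when $t\in[i-2j,i-1]$, so $A'(i,j)$ depends on at most the $2j$ preceding values of $Lose(j)$. Moreover, by the induction hypothesis each $Lose(t)$ is eventually periodic with period dividing the claimed period $p$ of $L(j)$, so $B'(i+p,j)=B'(i,j)+p$ once $i$ exceeds the (small) pre-periods of $Lose(1),\dots,Lose(j-1)$; this is Eq.~(\ref{bperiod}). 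Thus for large $i$ the right-hand side of~(\ref{losemex}) is a shift-invariant function of the recent history of $Lose(j)$.

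Given this, each formula is verified in two stages. First, for the fixed $j$ at hand I would compute $Lose(j)[i]$ explicitly from~(\ref{losemex}) for every $i$ in the initial segment $[0,\,i_0+p+2j)$, where $i_0$ and $p$ are read off from the claimed $L(j)$ (e.g.\ $i_0=6,\ p=4$ for $L(4)$ and $i_0=12,\ p=6$ for $L(6)$); each such computation is a single mex of an explicit finite set and is checked against $f_j$. Second, I would close by strong induction on $i$: for $i\ge i_0+p+2j$, the look-back window $[i-2j,i-1]$ lies entirely in the periodic regime, so the inductive hypothesis together with periodicity of $f_j$ gives $f_j(t)=f_j(t-p)+p$ throughout that window; hence $A'(i,j)=A'(i-p,j)+p$, and combined with $B'(i,j)=B'(i-p,j)+p$ this turns the argument of the mex in~(\ref{losemex}) into the shift-by-$p$ of its value at $i-p$ (exactly the manipulation yielding~(\ref{loseperiod})). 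Therefore $Lose(j)[i]=Lose(j)[i-p]+p=f_j(i)$, completing the induction.

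The main obstacle is not conceptual but bookkeeping: the bulk of the work is the large, routine table of mex evaluations over the initial segments for $j=2,\dots,6$. The one genuinely delicate point is choosing the length $i_0+p+2j$ of the verified segment so that, for every unverified $i$, the look-back window $[i-2j,i-1]$ never reaches into the aperiodic prefix of $Lose(j)$ or of any $Lose(t)$ with $t<j$ — otherwise the shift step $A'(i,j)=A'(i-p,j)+p$ could fail; this is why the $2j$ margin beyond one full period is included. Finally, the small boundary positions (in particular $\texttt{b}^j$, where removing all of $\texttt{b}^j$ merges the two $\texttt{a}$-blocks into $\texttt{a}^{i+k}$) all fall inside the explicitly checked initial segment, so the merge move requires no separate treatment once the base computations are in place.
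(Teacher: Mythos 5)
The paper states this theorem without any proof (the tables are evidently the output of a direct computation), so your plan has to stand on its own. Its architecture --- verify an explicit initial segment of $\lose{j}{\cdot}$ via the mex recurrence, then propagate by a shift-by-$p$ argument once the look-back window $[i-2j,i-1]$ lies in the periodic regime --- is sound and correctly mirrors the mechanism of the preceding periodicity proof, and your bound showing $A'(i,j)$ depends only on the $2j$ preceding values is right. But there are three concrete problems. First, the proposal never performs the computations that constitute the actual content of the theorem: for $j=2,\dots,6$ every claimed pair ultimately rests on ``a large, routine table of mex evaluations'' that is not supplied, so nothing beyond $L(1)$ is in fact verified. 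Second, Eq.~(\ref{losemex}) is not correct at $i=0$: it ignores the move that deletes all of $\mtt{b}^j$ and merges the two $\mtt{a}$-blocks into $\mtt{a}^{i+k}$, which is a $\pp$ exactly when $i+k=0$; consequently $0$ must additionally be excluded from the mex when $i=0$, and a literal application of (\ref{losemex}) gives $\lose{j}{0}=\text{mex}(\{1,\dots,j-1\})=0$ instead of the claimed value $j$. You mention the $\mtt{b}^j$ boundary but then assert it ``requires no separate treatment,'' which is backwards: your stage-one computation fails at $i=0$ unless the recurrence is patched there.

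Third, and most seriously for the closing induction, your justification of $B'(i+p,j)=B'(i,j)+p$ is that each $Lose(t)$ with $t<j$ is eventually periodic ``with period dividing the claimed period $p$ of $L(j)$.'' This is false for $j=6$: the claimed period is $p=6$, while $Lose(4)$ and $Lose(5)$ have eventual period $4$, and $4\nmid 6$. The identity $B'(i+6,6)=B'(i,6)+6$ does hold for large $i$, but only because the union is more symmetric than its parts: for $i\ge 14$ the set $\{\lose{t}{i}-i \mid 1\le t\le 5\}$ is the constant set $\{-2,-1,0,1,2\}$ (the $\pm1$ contributions of $Lose(2),Lose(3)$ and the $\pm2$ contributions of $Lose(4),Lose(5)$ swap with $i$ but their union does not change). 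Your argument as stated does not see this and would have to be repaired, e.g.\ by working with $\mathrm{lcm}(4,6)=12$ and then showing the resulting description collapses to period $6$, or by proving the constancy of $B(i,6)-i$ directly. Until the tables are actually produced and these two repairs are made, this is a plan for a verification rather than a proof.
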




\subsection{Complementary Palindrome}

\begin{definition}
  A string $\str \in \Sigma^*$ is a \textit{(generalized) complementary palindrome} if $\str = t \cdot f(t^R)$ for some string $t \in \Sigma^*$ and a bijective coding $f$ satisfying $f(a) \neq a$ for any $a \in \Sigma$.
\end{definition}


For example, $\str = \texttt{abccaacb}$ is a complementary palindrome since $s = t \cdot f(t^R)$ for $t = \texttt{abcc}$ and $f(\texttt{a}) = \texttt{b}$, $f(\texttt{b}) = \texttt{c}$, $f(\texttt{c}) = \texttt{a}$.

\begin{theorem}
  Every complementary palindrome is a $\pp$.
\end{theorem}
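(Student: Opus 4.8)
The plan is to exhibit an explicit \emph{mirroring strategy} for the previous player and package it through the following sufficient condition: if a set $S$ of positions contains $\varepsilon$ and has the property that for every $s\in S$ and every move $s\to s'$ there is a move $s'\to s''$ with $s''\in S$, then every element of $S$ is a $\pp$. This follows by induction on $|s|$: starting from $s \in S$ the previous player can answer any move so as to return to $S$, and since every move strictly shortens the string and the previous player always hands back a string in $S$, the position eventually handed back is $\varepsilon$, at which point the opponent cannot move. I would take $S$ to be the set of all complementary palindromes; since $\varepsilon = \varepsilon\cdot f(\varepsilon^R)\in S$, only the closure property remains to be checked.

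To set up the pairing, write $s = t\,f(t^R)$ with $n=|t|$, so $|s|=2n$, and pair position $i$ with position $2n+1-i$. A direct check shows that for $i\le n$ the paired characters satisfy $s[2n+1-i]=f(s[i])$. The crucial consequence of $f$ being fixed-point-free is that paired positions always carry \emph{distinct} characters; in particular the two central positions $n$ and $n+1$ differ, so no single-character block can straddle the centre. Hence any substring the first player deletes lies entirely in the left half $[1,n]$ or entirely in the right half $[n+1,2n]$, and its mirror image under $i\mapsto 2n+1-i$ is a block of a single character (equal to $f(c)$, resp.\ $f^{-1}(c)$) sitting in the \emph{other}, so-far untouched, half. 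The previous player's reply is to delete exactly this mirror block. A short computation with the run-length description then shows that removing a block from $t$ at positions $[p,q]$ together with its mirror block from $f(t^R)$ leaves precisely $t'\,f(t'^R)$, where $t'$ is $t$ with positions $[p,q]$ deleted; this is again a complementary palindrome, where one uses that $f^{-1}$ is again a fixed-point-free bijection so the argument is symmetric in the two halves.

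The main obstacle, and the one point that must be argued rather than asserted, is the merging phenomenon that distinguishes \strnim{} from \nim{}: after a deletion the runs on either side of the gap, or the two halves at the centre, may coalesce. I would dispose of this by working at the level of the character sequence rather than the run-length encoding: the two deletions act on disjoint sets of positions, so their combined effect on $s$, irrespective of which runs happen to merge, is simply to erase those positions, and the resulting string read left to right is literally $t'\,f(t'^R)$. I also need to confirm the reply is always legal: the mirror block lies in the half the first player did not touch, so it is still a contiguous single-character run and remains available, and since $n\ge 1$ the string the first player produces is nonempty, so the previous player indeed has a move. These checks complete the verification of the closure property and hence the theorem.
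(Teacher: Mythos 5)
Your proof is correct and uses essentially the same mirror strategy as the paper: the fixed-point-free coding forces the two central characters to differ, so no run straddles the centre, and the reflected block in the opposite half is a legal response that restores a complementary palindrome. You supply some details the paper leaves implicit (the induction justifying the pairing strategy and the handling of run merging), but the underlying argument is identical.
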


\begin{proof}
  We show a \emph{mirror strategy}.
  Let $s$ be any complementary palindrome.
  Let $i$ and $j$ be any integers such that $s[i\!:\!j]\! =\! a^k$ for some $a \in \Sigma$ and $k \geq 1$.
  $s[i:j]$ never crosses the center of $s$ since $s[\frac{|s|}{2}] \neq f(s[\frac{|s|}{2}]) = s[\frac{|s|}{2} + 1]$.
  Let $i' = |s| - i + 1$ and $j' = |s| - j + 1$.
  Then $s[j':i']$ never overlaps with $s[i:j]$, and $s[j':i'] = b^k$ for some $b \in \Sigma$ since $f$ is a bijection; $b = f(a)$ if $i < \frac{|s|}{2}$ and $b = f^{-1}(a)$ otherwise.
  For any removal of $s[i:j]$ by a player, the opposite player can always remove $s[j':i']$ so that the resulting string will be a complementary palindrome again.
\end{proof}



\subsection{Run length at most one}


\begin{theorem}
  \label{theoremone}
  Let $\str \in \Sigma^*$ be any string with $\str[i] \neq \str[i\!+\!1]$ for $1 \leq i < |\str|$.
  Then $s \in \ppos$ iff $|\str|$ is even.
\end{theorem}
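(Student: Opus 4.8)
The plan is to prove the two directions essentially independently, establishing the even case first by an explicit strategy and then deriving the odd case from it. The one structural fact that makes everything work is the following observation about the hypothesis $\str[i]\neq\str[i+1]$: in a string with run length at most one, every maximal block of a single character has length exactly one, so \emph{each move deletes exactly one character}. Moreover, deleting one character can only merge its two former neighbours, and since in a run-length-one string those neighbours were each isolated (bounded by distinct symbols), the merge can create \emph{at most one} run of length two and never a longer run. Thus one move from a run-length-one string of even length yields a string of odd length that is again run-length-one except for possibly a single block $aa$.

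First I would treat the even case, showing $s\in\ppos$ when $|s|$ is even by giving the previous player a strategy that maintains the invariant ``the current position is run-length-one and of even length'' after each of his moves. Suppose the opponent moves from such a position; by the observation he deletes one character and reaches an odd-length string $s'$. If no merge occurred, $s'$ is run-length-one, and the previous player deletes an endpoint character of $s'$ (which, having only one neighbour, never triggers a merge), restoring a run-length-one string of even length. If a merge occurred, $s'$ contains a unique block $aa$, and the previous player deletes one copy of that $a$, shortening the block to length one without disturbing its neighbours and again restoring the invariant. In either case a legal reply exists and the total length drops by two each round, so the opponent is the one eventually facing $\varepsilon$ and losing; hence the starting position is a $\pp$.

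Finally the odd case follows immediately: if $|s|$ is odd and $s\neq\varepsilon$, the next player deletes an endpoint character, obtaining a run-length-one string of even length, which is a $\pp$ by the even case; therefore $s\in\npos$. The delicate point throughout is that successors of a run-length-one string need \emph{not} themselves be run-length-one, which is exactly why a naive induction on $|s|$ fails; the observation above is what confines the damage of a single deletion to at most one length-two block, so that the responder can always ``repair'' the invariant in a single move. I expect verifying this repair step---in particular that deleting an endpoint, respectively one symbol of the merged block, neither creates a new merge nor leaves the player without a legal move---to be the only part needing care, the remainder being bookkeeping on the parity of $|s|$.
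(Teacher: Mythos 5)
Your proof is correct and follows essentially the same pairing strategy as the paper: respond to a merge by deleting one character of the newly formed length-two block, and respond to a non-merging deletion by removing an endpoint, preserving the invariant ``run-length one and even length''; the odd case then reduces to the even case by deleting an endpoint. Your version is in fact slightly more careful than the paper's, since you explicitly justify that every move removes exactly one character and creates at most one length-two run.
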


\begin{proof}
  Assume $s$ be a string with $\str[i] \neq \str[i\!+\!1]$ for $1 \leq i < |\str|$ and $|s|$ is even.
  Whatever operation a player performs on $s$, the opposite player can always make it into $s'$ with $s'[i] \neq s'[i\!+\!1]$ for $1 \leq i < |s'|$ and $|s'|$ is even.
  If both sides of a character that a player removes are same, the opposite player should remove one of the same character.
  If both sides of a character that a player removes are different, the opposite player should remove the character at the beginning or end of string.
  If $s$ is a string with $\str[i] \neq \str[i\!+\!1]$ for $1 \leq i < |\str|$ and $|s|$ is odd, $s \notin \ppos$ since a player can remove the first or last character of a string to make it a string with the above condition.
\end{proof}

\subsection{A context-free subset of $\ppos$}

\begin{theorem}
  \label{theoremcontextfree}
  Let $L = \{\mtt{a}^k \mtt{b}^k \mid k \ge 0\} \cup \{\mtt{ba} \}$. $L^* \subseteq \ppos$.
\end{theorem}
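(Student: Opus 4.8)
The plan is to use a standard second-player (responder) strategy argument. First I would isolate the following general principle: if a set $S \subseteq \Sigma^*$ contains $\varepsilon$ and has the property that for every nonempty $s \in S$ and every $s' \in \next(s)$ there is some $s'' \in \next(s') \cap S$, then $S \subseteq \ppos$. This follows by strong induction on $|s|$ using the recursive characterization $s \in \ppos$ iff $\next(s) \subseteq \npos$: any move $s \to s'$ can be answered by a move $s' \to s''$ into $S$ with $|s''| < |s'| < |s|$, so $s'' \in \ppos$ by the induction hypothesis, whence $s' \in \npos$; as this holds for every $s' \in \next(s)$, we get $s \in \ppos$. It then suffices to verify that $S = L^*$ enjoys this closure property, and the whole proof reduces to designing the responder's reply.

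Every generator of $L$ contains equally many $\mtt{a}$'s and $\mtt{b}$'s, so every $w \in L^*$ satisfies $\#_{\mtt{a}}(w) = \#_{\mtt{b}}(w)$; in particular no nonempty word of $L^*$ is a single run, so the first player can never reach $\varepsilon$ in one move, and the responder is never forced to concede immediately. Next I would describe the block decomposition $w = w_1 w_2 \cdots w_m$ with each $w_i \in \{\mtt{a}^k\mtt{b}^k\} \cup \{\mtt{ba}\}$ (where the first block is a literal prefix), and record how runs merge across a boundary: $\mtt{a}^k\mtt{b}^k$ followed by $\mtt{ba}$ merges their $\mtt{b}$'s, and $\mtt{ba}$ followed by $\mtt{a}^k\mtt{b}^k$ merges their $\mtt{a}$'s, while the other two adjacencies create no merge. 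The decisive structural observation is that each run of $w$ consists of the $\mtt{a}^k$ (resp.\ $\mtt{b}^k$) part of a single $\mtt{a}^k\mtt{b}^k$ block together with \emph{at most one} extra symbol donated by an adjacent $\mtt{ba}$ block.

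With this in hand I would define the responder's reply to an arbitrary first move $w \to w'$, which removes a maximal-letter block $c^t$ from one run of $w$. In the main case the removed symbols lie strictly inside the $\mtt{a}^k$ (or $\mtt{b}^k$) part of one block $\mtt{a}^k\mtt{b}^k$; the responder removes $t$ copies of the complementary letter from the matching part of the \emph{same} block, turning it into $\mtt{a}^{k-t}\mtt{b}^{k-t}$ and keeping $w''$ in $L^*$. If instead the move also consumes the extra symbol donated by a $\mtt{ba}$ block, or consumes an entire block, the responder either rebalances the surviving $\mtt{a}^k\mtt{b}^k$ block as before or removes the complementary run so that one or two whole blocks are erased; since deleting consecutive blocks from a product of generators yields another product of generators, $w'' \in L^*$ again. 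In each case one must also check that the responder's intended removal is itself a legal move, i.e.\ that the target symbols form a contiguous run of the current string.

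The main obstacle is exactly this boundary bookkeeping. The naive mirror that simply removes the matched partners can target \emph{non-contiguous} symbols once a merge is involved (a partner of a $\mtt{ba}$-symbol sits on the opposite side of its own block), so the reply at a merged run must be chosen more carefully, and after consuming a boundary symbol one has to re-certify that the shortened word still parses as a product of generators. I expect the cleanest route is to prove the structural observation above precisely — that at most one foreign symbol merges into any run — and then to organize the case analysis according to whether the first move leaves the affected block(s) alive or destroys them, so that in every case the reply is a single legal run-removal landing back in $L^*$. Invoking the general principle from the first paragraph then yields $L^* \subseteq \ppos$.
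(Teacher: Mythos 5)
Your proposal is correct and takes essentially the same route as the paper: the paper's proof observes that any move on $s \in L^*$ yields a string of the form $s_1 \mtt{a}^i \mtt{b}^j s_2$ with $s_1, s_2 \in L^*$ and $i \ne j$, and has the responder move to $s_1 \mtt{a}^k \mtt{b}^k s_2$ with $k = \min\{i,j\}$ --- exactly your ``rebalance the affected block or erase it together with the adjacent $\mtt{ba}$'' reply, with your induction-on-length principle being the standard justification the paper leaves implicit. Your run-merging bookkeeping is sound and simply spells out the case analysis that the paper compresses into its one-line structural claim.
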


\begin{proof}
Assume $s \in L^*$. 
Whatever operation a player performs on $s$, the resulting string is of the form $s_1 \texttt{a}^i \texttt{b}^j s_2$ for some $i,j \ge 0$ with $i \ne j$ and $s_1,s_2 \in L^*$.
Then, the opposite player can always make it into $s_1 \texttt{a}^k \texttt{b}^k s_2 \in L^*$ with $k = \min\{i,j\}$.
\end{proof}

For example, $\texttt{aabb}\cdot\texttt{aaabbb}\cdot\texttt{ba}\in \ppos$ since $\texttt{aabb},\texttt{aaabbb},\texttt{ba} \in L$.
Note that $\texttt{aabb}\cdot\texttt{bbaa} \notin L^*$ since $\texttt{bbaa} \notin L$.


\subsection{Thue--Morse strings}

The \emph{Thue--Morse strings} are well-known in the field of combinatorics on words~\cite{lothaire1997combinatorics,AlloucheS98}.
We analyze the case where the game position is a prefix of the Thue–Morse string.

\begin{definition}
  The $i$-th Thue--Morse strings $\tm{i}$ are recursively defined by $\tm{0} = \texttt{a}$ and $\tm{i} = \tm{i-1} \cdot g(\tm{i-1})$ for $i \geq 1$,
  where $g$ is the coding defined by $g(\texttt{a})=\texttt{b}$ and $g(\texttt{b})=\texttt{a}$.
\end{definition}

For example,
$\tm{1} = \texttt{ab}$,
$\tm{2} = \texttt{abba}$,
$\tm{3} = \texttt{abbabaab}$.
Let $\tm{\infty} = \displaystyle\lim_{n \rightarrow \infty}\!\!\!\tm{n} = \texttt{abbabaabbaababbabaa}\ldots$

\begin{theorem}
  \label{tmstrings}
  $\tm{\infty}[1:i] \in \ppos$ iff $i$ is even.
\end{theorem}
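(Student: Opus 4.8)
The plan is to prove the two directions together by strong induction on $i$, establishing simultaneously that $\tm{\infty}[1:i] \in \ppos$ when $i$ is even and $\tm{\infty}[1:i] \in \npos$ when $i$ is odd. The base case $i = 0$ is the terminal position $\varepsilon \in \ppos$.

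The odd case is the easy direction, and I would dispatch it first. If $i$ is odd, removing the single last character of $\tm{\infty}[1:i]$ is always a legal operation (it deletes one symbol from the final run), and because the deletion happens at the right end it causes no merging, so the result is exactly $\tm{\infty}[1:i-1]$. Since $i-1$ is even and smaller, the induction hypothesis gives $\tm{\infty}[1:i-1] \in \ppos$; hence $\tm{\infty}[1:i]$ has a $\mcal{P}$-successor and lies in $\npos$. This reduces the whole theorem to the single claim that every even-length prefix of $\tm{\infty}$ is a $\mcal{P}$-position.

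For the even case I would exhibit a response (mirror/pairing) strategy: for $s = \tm{\infty}[1:2m]$ I must show that for every $s' \in \next(s)$ the next player can move back into a known $\mcal{P}$-position. The structural facts I would lean on are (i) that $\tm{\infty}$ is overlap-free, so every maximal run has length at most $2$, and (ii) the self-similarity $\tm{\infty} = \mu(\tm{\infty})$ for the morphism $\mu(\mtt{a}) = \mtt{ab}$, $\mu(\mtt{b}) = \mtt{ba}$, which lets me group $s$ into $m$ blocks $\tm{\infty}[2t\!-\!1:2t] \in \{\mtt{ab},\mtt{ba}\}$ whose two symbols are always complementary, with runs of length $2$ occurring only across block boundaries. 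A single removal $c^k$ therefore disturbs the block decomposition only locally, and the response should be a matching removal that re-balances the affected blocks and returns to a $\mcal{P}$-position.

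A small example already shows that the set of even prefixes is not by itself closed under one opponent move followed by one response: from $\mtt{abba}$ the move to $\mtt{bba}$ can only be answered by $\mtt{ba}$, which is a $\mcal{P}$-position but is not a prefix of $\tm{\infty}$. Hence the technical heart of the argument, and the step I expect to be the main obstacle, is to identify an auxiliary family $\mcal{F}$ of $\mcal{P}$-positions that contains every even prefix of $\tm{\infty}$ and is closed under the opponent-move/response dynamics, and then to verify this closure by a finite case analysis on where and how the run $c^k$ is deleted. The genuinely new case, absent both in $\nim$ and in the run-length-$\le 1$ setting of Theorem~\ref{theoremone}, is a removal that merges two neighbouring runs across a block boundary (the merging phenomenon highlighted in the introduction); I would handle it by combining overlap-freeness with the recursive structure of $\tm{\infty}$ to show the merged configuration still admits a balancing response, and I would close the induction by checking that each response target is either a strictly shorter even prefix of $\tm{\infty}$ (covered by the induction hypothesis) or another member of $\mcal{F}$.
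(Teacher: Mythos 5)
Your treatment of the odd case is sound and matches the paper's: deleting the last character yields $\tm{\infty}[1:i-1]$ with no merging, so it suffices to show that every even prefix is in $\ppos$. But for that even case your proposal is a plan rather than a proof, and the missing step is exactly the one you yourself flag as ``the technical heart'': you never actually exhibit the auxiliary family $\mcal{F}$ of $\mcal{P}$-positions that contains all even prefixes and is closed under the move/response dynamics, nor do you carry out the case analysis verifying that closure. As it stands, the even direction proves nothing. Your $\mtt{abba}\to\mtt{bba}\to\mtt{ba}$ example correctly shows why the even prefixes alone do not suffice, but identifying the right enlargement is the entire content of the argument, not a routine afterthought; without it the claimed ``matching removal that re-balances the affected blocks'' is unsubstantiated, especially in the merging case you single out.

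The paper fills precisely this gap with Theorem~\ref{theoremcontextfree}: the family is $\mcal{F}=L^*$ for $L=\{\mtt{a}^k\mtt{b}^k\mid k\ge 0\}\cup\{\mtt{ba}\}$ (note $\mtt{ba}\in L$ handles your $\mtt{abba}$ example), the closure argument being that any single move on $s\in L^*$ produces $s_1\mtt{a}^i\mtt{b}^j s_2$ with $i\ne j$ and $s_1,s_2\in L^*$, to which the responder plays $s_1\mtt{a}^k\mtt{b}^k s_2\in L^*$ with $k=\min\{i,j\}$. Once that theorem is available, the even case is immediate from $\tm{\infty}[1:2m]\in(\mtt{ab}+\mtt{ba})^*\subseteq L^*$, and no induction on $i$, no overlap-freeness, and no self-similarity $\mu(\mtt{a})=\mtt{ab}$, $\mu(\mtt{b})=\mtt{ba}$ of the Thue--Morse word is needed beyond the trivial observation that its even prefixes factor into the blocks $\mtt{ab}$ and $\mtt{ba}$. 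If you want to salvage your route, you should either prove the analogue of Theorem~\ref{theoremcontextfree} for your chosen $\mcal{F}$ in full, or simply take $\mcal{F}=L^*$ and reduce to that theorem as the paper does.
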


\begin{proof}
Let $t_i = \tm{\infty}[1:i]$.
If $i$ is even, then $t_i \in (\texttt{ab}+\texttt{ba})^* \subseteq L^*$ for $L = \{\texttt{a}^k \texttt{b}^k \mid k \ge 0\} \cup \{\texttt{ba} \}$, which implies $t_i \in \ppos$ by Theorem~\ref{theoremcontextfree}.
If $i$ is odd, $t_i \in (\texttt{ab}+\texttt{ba})^* (\texttt{a}+\texttt{b})$.
By removing the last character of $t_i$, we obtain a string in $L^*$.
\end{proof}

For example, $\tm{\infty}[1:6] = \texttt{abbaba} \in \ppos$ and $\tm{\infty}[1:9] = \texttt{abbabaabb} \notin \ppos$.
Theorem~\ref{tmstrings} completely identifies the $\pp$ and $\np$ when the position is a prefix of $\tm{\infty}$.

\section{Conclusion}

In this paper, we introduced a new game \strnim, a variant of \nim \ played on strings and showed some sufficient conditions for a position in \strnim \ to be in $\ppos$ from various points of view such as stringology. 


\renewcommand{\baselinestretch}{0.70}
\bibliographystyle{plain}
\bibliography{ref}
\end{document}